\newtheoremstyle{customdef}
  {3pt} 
  {3pt} 
  {\itshape} 
  {} 
  {\bfseries} 
  {.} 
  { } 
  {} 
\theoremstyle{customdef}
\newcommand{\fair}{FA*IR}
\newcommand{\hyperfair}{\textit{hyperFA*IR}}
\begin{document}

\acmYear{2025}\copyrightyear{2025}
\setcopyright{rightsretained}
\acmConference[ACM FAccT '25]{ACM Conference on Fairness, Accountability, and Transparency}{June 23--26, 2025}{Athens, Greece}
\acmBooktitle{ACM Conference on Fairness, Accountability, and Transparency (ACM FAccT '25), June 23--26, 2025, Athens, Greece}
\acmDOI{10.1145/3715275.3732143}
\acmISBN{979-8-4007-1482-5/25/06}

\begin{CCSXML}
<ccs2012>
   <concept>
       <concept_id>10002951.10003317.10003338.10003346</concept_id>
       <concept_desc>Information systems~Top-k retrieval in databases</concept_desc>
       <concept_significance>500</concept_significance>
       </concept>
   <concept>
       <concept_id>10002950.10003648.10003688</concept_id>
       <concept_desc>Mathematics of computing~Statistical paradigms</concept_desc>
       <concept_significance>500</concept_significance>
       </concept>
 </ccs2012>
\end{CCSXML}

\ccsdesc[500]{Information systems~Top-k retrieval in databases}
\ccsdesc[500]{Mathematics of computing~Statistical paradigms}

\keywords{algorithmic fairness, 
group fairness, 
bias in computer systems, 
ranking, 
top-k selection, 
hypergeometric distribution}

\title{hyperFA*IR: A hypergeometric approach to fair rankings with finite candidate pool}

\author{Mauritz N. Cartier van Dissel}
\orcid{0009-0004-7916-6528}
\authornote{Corresponding authors}
\affiliation{
  \institution{Complexity Science Hub}
  \city{Vienna}
  \country{Austria}
}
\affiliation{
  \institution{Graz University of Technology}
  \city{Graz}
  \country{Austria}
}
\email{cartiervandissel@csh.ac.at}

\author{Samuel Martin-Gutierrez}
\orcid{0000-0002-5685-7834}
\affiliation{
  \institution{Complexity Science Hub}
  \city{Vienna}
  \country{Austria}
}
\affiliation{
  \institution{Graz University of Technology}
  \city{Graz}
  \country{Austria}
}
\email{martin.gutierrez@csh.ac.at}

\author{Lisette Esp\'{\i}n-Noboa}
\orcid{0000-0002-3945-2966}
\affiliation{
  \institution{Complexity Science Hub}
  \city{Vienna}
  \country{Austria}
}
\affiliation{
  \institution{Graz University of Technology}
  \city{Graz}
  \country{Austria}
}
\affiliation{
  \institution{Central European University}
  \city{Vienna}
  \country{Austria}
}
\email{espin@csh.ac.at}

\author{Ana Mar\'{\i}a Jaramillo}
\orcid{0000-0003-2409-3064}
\affiliation{
  \institution{Graz University of Technology}
  \city{Graz}
  \country{Austria}
}
\affiliation{
  \institution{Complexity Science Hub}
  \city{Vienna}
  \country{Austria}
}
\email{jaramillo@csh.ac.at}

\author{Fariba Karimi}
\orcid{0000-0002-0037-2475}
\affiliation{%
  \institution{Graz University of Technology}
  \city{Graz}
  \country{Austria}
}
\affiliation{
  \institution{Complexity Science Hub}
  \city{Vienna}
  \country{Austria}
}
\authornotemark[1]
\email{karimi@tugraz.at}

\renewcommand{\shortauthors}{Cartier van Dissel, Martin-Gutierrez, Esp\'{\i}n-Noboa, Jaramillo and Karimi}

\begin{abstract}
    Ranking algorithms play a pivotal role in decision-making processes across diverse domains, from search engines to job applications. When rankings directly impact individuals, ensuring fairness becomes essential, particularly for groups that are marginalised or misrepresented in the data. Most of the existing group fairness frameworks often rely on ensuring proportional representation of protected groups. However, these approaches face limitations in accounting for the stochastic nature of ranking processes or the finite size of candidate pools.
    To this end, we present \hyperfair, a framework for assessing and enforcing fairness in rankings drawn from a finite set of candidates.
    It relies on a generative process based on the hypergeometric distribution, which models real-world scenarios by sampling without replacement from fixed group sizes. This approach improves fairness assessment when top-$k$ selections are large relative to the pool or when protected groups are small.
    We compare our approach to the widely used binomial model, which treats each draw as independent with fixed probability, and demonstrate---both analytically and empirically---that our method more accurately reproduces the statistical properties of sampling from a finite population. To operationalise this framework, we propose a Monte Carlo-based algorithm that efficiently detects unfair rankings by avoiding computationally expensive parameter tuning. 
    Finally, we adapt our generative approach to define affirmative action policies by introducing weights into the sampling process. 
    

\end{abstract}

\maketitle

\RestyleAlgo{ruled}
\SetKwComment{Comment}{/* }{ */}

\section{Introduction}

In our everyday lives, ranking algorithms have become ubiquitous, influencing decisions as varied as choosing the best restaurants in a city to selecting the fastest routes for a commute. However, when these algorithms rank humans---for example, candidates for jobs, university admissions, or criminal sentencing---their outcomes carry profound consequences. A person's position in such rankings can affect career opportunities \cite{geyik2019fairness}, educational access \cite{mathioudakis2020affirmative}, or even legal outcomes \cite{angwin2022machine}. Moreover, historical inequalities and data misrepresentation often lead to systematic disadvantages for certain groups, raising concerns about the fairness of ranking systems. 

In response to these challenges, recent years have seen a surge of research aimed at developing fair ranking methodologies \cite{pitoura2022fairness}. But what does it mean for a ranking to be fair? One common approach involves ensuring proportional representation of protected groups in top-ranking positions \cite{yang2017measuring, schumacher2024properties}. For instance, fairness could mean that a group appears in the first positions in proportion to its overall presence in the ranking (this approach is known as demographic parity \cite{mehrabi2021survey}). Alternatively, affirmative action policies might require fixed representation thresholds---such as balancing the representation of male and female candidates---irrespective of their population fractions. However, these fairness goals face intrinsic limitations: a ranking cannot include more members from a group than are present in the candidate pool, and the stochastic nature of the selection process can cause deviations from expected group proportions---even under fair conditions. To address this, fairness metrics must account for variability and flag rankings as unfair only when deviations are statistically significant. Determining how best to design such metrics remains an open question.

Towards this end, we introduce \hyperfair, a group fairness framework for binary attributes. Specifically, we assess fairness by testing whether a ranking could plausibly have been generated using a fair sampling strategy. A foundational approach was proposed in \cite{yang2017measuring}, which assumes that rankings are fairly generated if group membership in top positions is determined by binomial sampling with a fixed fairness probability. Building on this, \cite{zehlike2017fa} introduced a fairness-assessment strategy, where rankings are deemed unfair if they reject the hypothesis of being generated using this binomial generative model with a given confidence level. 
In our work, we propose a rigorous sampling model that corrects critical assumptions in prior methods and accounts for the variability inherent when selecting a large share of candidates or facing group imbalance.

In particular, \hyperfair~introduces the hypergeometric model, which randomly selects individuals from a finite pool with fixed counts of protected and non-protected candidates.
Unlike the binomial model, it accounts for the group of individuals already selected, ensuring that sampling probabilities evolve accordingly. To assess fairness, we focus on the group composition within the top-$k$ positions and apply hypergeometric tests that naturally reflect the finite size of the candidate pool. This approach is especially advantageous when $k$ is large relative to the population size $n$, or when the protected group is small. Indeed, it is well-established that the hypergeometric distribution diverges from the binomial once the ratio $\frac{k}{n}$ exceeds $10\%$ \cite{brunk1968teacher,johnson2005hypergeometric}. Additionally, our approach ensures fairness throughout the entire ranking, including the bottom positions---a critical but often overlooked aspect of fair rankings.

To operationalise our framework, we develop a Monte Carlo algorithm that enables fairness assessments under the hypergeometric model. Our approach supports both single tests for the top-$k$ positions and multiple tests across sequential top subsets of the ranking, ensuring a thorough evaluation of fairness. This approach resembles the one used for finding the adjusted significance level in \fair~\cite{zehlike2017fa}, but our method eliminates the need for linear searches, achieving a more efficient performance. In addition to fairness evaluation, our framework incorporates the \fair~re-ranking algorithm to adjust rankings and enforce compliance with fairness criteria.

Our key contributions are:
\begin{itemize}
    \item We highlight the mathematical and statistical boundaries when inspecting group representation across the top-k positions of a ranking. \emph{(Sections \ref{sec:fair_models} and \ref{sec:hyperfair})}
    \item We introduce a novel fair generative process based on the hypergeometric distribution that naturally applies to finite populations with fixed group sizes. \emph{(Section \ref{sec:hyper_model})}
    \item We introduce a Monte Carlo algorithm to perform multiple fairness tests across different top-ranks. \emph{(Section \ref{sec:measure})}
    \item We apply \hyperfair~on university admissions. \emph{(Section \ref{sec:university})}
    \item We extend our model to affirmative action policies, introducing a weighted sampling strategy. \emph{(Section \ref{sec:aa})}
    \item We provide a thorough comparison between the hypergeometric and binomial approaches \emph{(Section \ref{sec:hyper_model})}, and offer guidance on their respective use cases.  \emph{(Appendix \ref{app:choice})}
\end{itemize}

\section{Defining fairness in rankings}\label{sec:fair_models}

Biases and discrimination are deeply entrenched in many aspects of modern society, manifesting in areas as diverse as hiring practices \cite{raghavan2020mitigating}, criminal justice systems \cite{angwin2022machine}, and access to financial services \cite{bartlett2022consumer}. These disparities often reflect historical inequities and structural discrimination, creating challenges for fairness and equity in automated systems. In recent years, significant efforts have been devoted to developing methods to counteract these effects, particularly by designing fair algorithms in the context of automated decision-making and artificial intelligence (AI) systems \cite{ntoutsi2020bias, mehrabi2021survey}.

In the context of ranking algorithms, fairness becomes a particularly intricate issue due to the positional nature of rankings. Rankings inherently assign different levels of prominence or importance to entities based on their positions, with the visibility of items or individuals being highly skewed towards the top positions. Consequently, numerous strategies have been proposed to evaluate the fairness of rankings and develop methodologies to ensure equitable outcomes \cite{zehlike2022rank1,zehlike2022rank2, pitoura2022fairness}. Rankings can be seen as the outcome of a process, such as academic achievement or professional success. From this perspective, equality (or individual fairness) demands that individuals with similar scores or positions in the ranking process receive comparable treatment, ensuring that their visibility aligns with their underlying performance. Conversely, equity (or group fairness) emphasises fair treatment across demographic groups, even if this results in treating individuals with similar qualifications differently. This often translates to the principle of demographic parity, where the visibility of each group is proportional to its representation in the overall population \cite{espin2022inequality, pitoura2022fairness, schumacher2024properties}.

In this paper, we focus on group fairness, emphasising the principle that all groups should be treated equitably. As articulated in \cite{pitoura2022fairness}, `abstractly, a fair ranking is one where the assignment of entities to positions is not unjustifiably influenced by the values of their protected attributes'. Here, the term `unjustifiably' is very important. In some contexts, positional bias based on a protected attribute may be warranted. For example, in a ranking based on height, gender-based differences in average height could lead to justified disparities. However, for the purposes of this study, we assume that an individual’s position in a ranking should not be influenced by their membership in a protected group, and any significant deviations will be flagged as indicators of unfairness. 

Furthermore, in Sections \ref{sec:fair_models} and \ref{sec:hyperfair}, we adopt demographic parity (or statistical parity) as our fairness goal. This choice is motivated by the fact that other criteria, such as equalised odds or equal opportunity, require access to both predicted outcomes and ground-truth labels, which in the ranking setting typically correspond to underlying score values \cite{pitoura2022fairness}. Incorporating such information would require defining a utility-aware null model, which falls beyond the scope of this paper. We leave this extension for future work. 

Once fairness is defined, it can be applied to evaluate rankings or to guide re-ranking algorithms. We briefly outline both next.

\paragraph{Fairness metrics.} A large body of work has been devoted to designing metrics that assess fairness in rankings by quantifying how much a given ranking deviates from an ideal baseline. Broadly, these metrics fall into four categories: prefix metrics \cite{yang2017measuring}, which evaluate fairness in successive top-$k$ prefixes; exposure metrics \cite{singh2018fairness}, which focus on the distribution of visibility; pairwise metrics \cite{beutel2019fairness}, which compare treatment across candidate pairs; and probabilistic metrics, which assess fairness with respect to a predefined random generative model. In this paper, we concentrate on this last category. For a complete overview of fairness metrics, see \cite{raj2020comparing, pitoura2022fairness, schumacher2024properties}.


\paragraph{Re-ranking strategies.} Significant effort has also been devoted to developing re-ranking strategies, which aim to transform an unfair ranking into one that meets specified fairness criteria. These strategies are generally categorised into three types---pre-processing, in-processing, and post-processing---based on the stage of intervention in the ranking pipeline. In this paper, we focus on the post-processing approach termed \emph{FA*IR} proposed in \cite{zehlike2017fa}, which adjusts a ranking only if it fails a given fairness test. For a comprehensive overview of re-ranking algorithms and methodologies, we refer readers to the surveys by \cite{zehlike2022rank1, zehlike2022rank2, pitoura2022fairness}.

\subsection{Fairness as a generative process}\label{sec:fair_generative}

\paragraph{A running example.} To illustrate the concepts better, we introduce a running example that will be used throughout the paper. Imagine we are organisers of a conference or workshop. For this event, we can admit only a limited number of $k$ individuals. During the admission process, we receive $n$ applications—a number moderately larger than the available spots. To make admission decisions, we rank candidates based on an evaluation of their CVs and motivation letters. The top $k$ individuals in this ranking are selected. Since we have demographic information about the applicants, we decide to ensure that the accepted pool reflects the demographic distribution of the candidate population. But how can we measure this similarity? At what point do we consider the ranking unfair?

\paragraph{Top-$k$ ranking proportions.} Suppose there are a total of $n$ candidates, of which $n_p$ belong to the protected group. The ranking derived from the evaluation scores can be represented as a binary sequence $(x_1, x_2, \dots, x_n)$, where $x_i = 1$ if candidate $i$ belongs to the protected group and $x_i = 0$ otherwise. The sequence is ordered such that $x_1$ corresponds to the highest-ranked candidate and $x_n$ to the lowest. The cumulative count of protected group members in the top $k$ positions is denoted by $y_k = \sum_{i=1}^k x_i$, while the proportion of protected candidates in the top $k$ is given by $p_k = \frac{y_k}{k}$.

Fairness is usually assessed by comparing $p_k$, the proportion of protected candidates in the top $k$, to $p = \frac{n_p}{n}$, the overall proportion of the protected group in the population \cite{yang2017measuring, schumacher2024properties}. A ranking can be considered fair if $p_k$ remains close to $p$ across the ranking. However, this comparison is subject to several important constraints. At small $k$, $p_k$ exhibits high variability due to the limited number of candidates considered. For instance, $p_1$ can only take values of $0$ or $1$. Additionally, the finite size of the candidate pool imposes inherent boundaries on $p_k$. These boundaries correspond to two extreme scenarios. Let $x = \frac{k}{n}$ denote the proportion of individuals considered up to rank $k$. In the first scenario, all protected candidates occupy the top positions, resulting in $p_k = 1$ until $x = p$, after which $p_k$ follows the curve $\frac{p}{x}$. In the second scenario, all protected candidates are relegated to the lowest positions, leading to $p_k = 0$ until $x = 1-p$, after which $p_k$ follows the curve $1 - \frac{1-p}{x}$. These boundaries, which arise from the constraints of a finite population, are discussed in greater detail in Appendix \ref{app:boundaries}, and are visualised in panel (b) of Figure \ref{fig:diagram}, for the case $p=0.3$.

Finally, randomness plays a critical role in determining deviations from fairness. Even under fair conditions, random sampling can result in deviations between $p_k$ and $p$, particularly for small $k$. For instance, in the top three positions, it is plausible to see only one demographic group represented, even if both groups are equally sized. To quantify such deviations, it is useful to introduce a null model based on random sampling and compare the observed $p_k$ to its expected behaviour under this model. We now describe a commonly used sampling model.

\subsection{The binomial model}
In \cite{yang2017measuring} the authors define a fair ranking as one generated through the following probabilistic process. Given a set of ranked items, a fair ranking is created by initialising an empty list and sequentially adding items from the observed ranking. For each position $j$, a Bernoulli trial with probability $f$ determines whether to select the most highly ranked available item from the protected group (if the trial succeeds) or from the non-protected group (if it fails), for as long as there are items in both groups. The probability $f$ is typically set to the proportion of the protected group within the overall population when the goal is demographic parity. A simple example showcasing this generative process can be found in panel (a) of Figure \ref{fig:diagram}, with $n=10$ and $n_p=3$. We now present an equivalent and more detailed definition for such model. 

\begin{definition}[Finite binomial model]
A ranking $(x_1, x_2, \dots, x_n)$ with $n_p$ protected elements is said to be generated under the \emph{finite binomial model} with fairness probability $f \in [0,1]$ if, for all $i \leq n$ and as long as candidates remain for both groups, $x_i$ is a realisation of the Bernoulli random variable $X_i \sim \emph{B}(f)$, where all $X_i$ are independent and identically distributed (iid). If, after a certain position $t$, only candidates from one group remain, then $x_i$ is deterministically assigned to that group for all $i \geq t$. Let $Y_k = \sum_{i=1}^k X_i$ denote the cumulative count of protected individuals among the top $k$ positions. For small values of $k$, specifically $ k < \min(n_p, n - n_p) $, $Y_k$ follows a binomial distribution of the form $\emph{Bin}(f,k)$.

\end{definition}

In \cite{zehlike2017fa}, the authors approximate this model using a standard binomial distribution. This approximation holds when the number of selected candidates $k$ satisfies $k < \min(n_p, n - n_p)$, as the depletion of the population has no effect under these conditions. However, as $k$ grows larger, $Y_k$ diverges from the binomial assumption, necessitating a more nuanced interpretation. For this reason, we refer to this framework as the finite binomial model.

\begin{figure*}[ht]
\begin{center}
\includegraphics[width=0.9\textwidth]{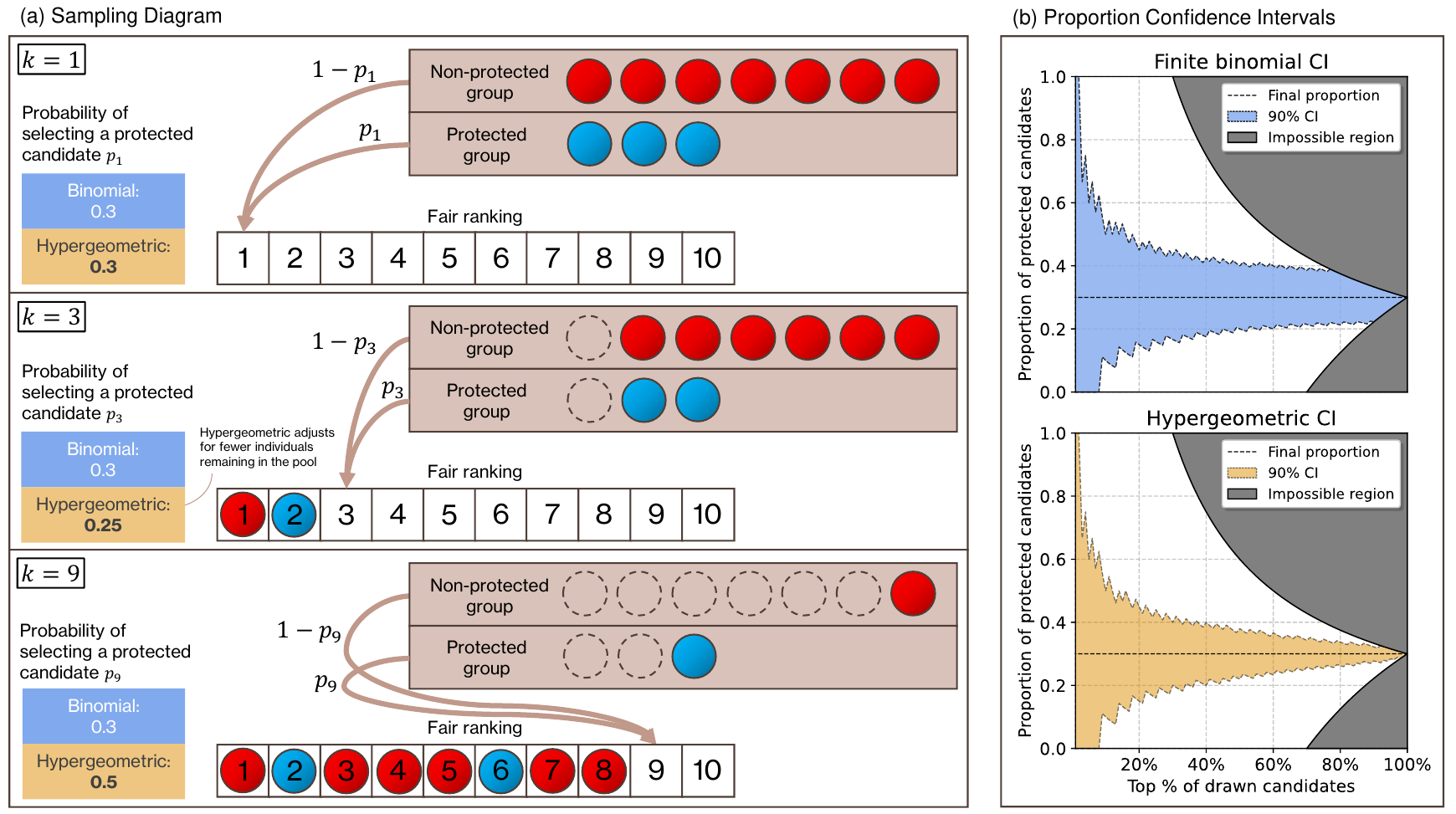}
\caption{\textbf{Binomial vs. Hypergeometric sampling in fair rankings.} Panel (a) applies these concepts to a conference admission scenario, in which the organisers aim to admit $k$ individuals from a pool of $n = 10$ applicants, consisting of 3 protected (blue) and 7 non-protected (red) candidates, while ensuring demographic fairness. For $k=1$, the probabilities of selecting a protected individual are identical under both models, as the finite pool constraints have no impact. For $k=3$, the hypergeometric probability decreases compared to the binomial model, reflecting the depletion of protected individuals in the pool. For $k=9$, the hypergeometric probability increases relative to the binomial, as most non-protected individuals have already been selected. This demonstrates the hypergeometric model's ability to dynamically adjust sampling probabilities, making it a more realistic approach for fair candidate selection in rankings. Panel (b) visualises the distribution of the proportions of protected individuals as the percentage of drawn individuals increases, for both the finite binomial and the hypergeometric models. Particularly, it shows the two-sided $90\%$ confidence intervals, with $n=100$, $n_p=30$, and fairness probability $f=\frac{n_p}{n}=0.3$. The hypergeometric model holds tighter confidence intervals, especially when $k$ approaches $n$. The grey area illustrates the region of proportions of protected candidates that cannot be reached because of the finite population.}
\Description{Figure comparing binomial and hypergeometric sampling in the context of ranking fairness. Panel (a) illustrates differences in protected group selection probabilities for different top $k$s in a conference admission example. Panel (b) shows confidence intervals for group proportions under each model, highlighting how hypergeometric sampling accounts for finite candidate pools.}
\label{fig:diagram}
\end{center}
\end{figure*}

The top subfigure in panel (b) of Figure \ref{fig:diagram} depicts the two-sided 90\% confidence intervals (CIs) for the proportion of protected individuals selected under the finite binomial model. These proportions are shown as a function of the percentage of individuals drawn relative to the total population size, with fixed parameters $n = 100$, $n_p = 30$, and $f = p = 0.3$. This visualisation of top-$k$ ranking proportions aligns with prior studies, such as \cite{espin2022inequality, stoica2024fairness}, which employ similar techniques to analyse ranking algorithms like PageRank. Furthermore, the gray regions represent the mathematical boundaries of these proportions, offering a novel perspective on the constraints imposed by finite populations, as discussed earlier.


\paragraph{The limitations of the binomial model.} When $f$ is equal to $p$, which means that our goal is to ensure demographic parity, the standard binomial model can be understood as sampling with replacement from a pool of $n_p$ protected and $n-n_p$ non-protected individuals, where, after each draw, the selected item is reintroduced into the pool of possible items. This ensures that the probability of selecting a protected candidate remains constant throughout the process. However, in our finite binomial model, this assumption of independence breaks down as the actual pool of candidates begins to deplete. Specifically, once the available number of candidates runs low, dependencies arise because it becomes impossible to sample more individuals than exist in the population. For instance, if there are only three protected individuals in the population, the probability of selecting a fourth protected individual is zero---an outcome inconsistent with a pure binomial distribution.

This effect is visually evident in panel (b) of Figure \ref{fig:diagram}. As the percentage of drawn individuals increases and approaches the impossible regions (depicted in gray), the finite binomial model transitions into a constrained regime. Importantly, reaching these boundaries has significant implications: it forces all remaining positions (the lower-ranked ones) to be filled by candidates from only one group. While this might not seem immediately unfair for top-ranked positions, it leads to disproportionate representation in the bottom ranks, which is clearly problematic and discriminatory.

A more intuitive and accurate approach is to sample without replacement, where the probability of selecting a particular individual dynamically adjusts based on the candidates already drawn. This method better reflects real-world scenarios and is less likely to encounter the artificial constraints imposed by the boundaries. We explore this alternative in the next section.

\section{\hyperfair: a hypergeometric framework for fair rankings}\label{sec:hyperfair}


\subsection{The hypergeometric model}\label{sec:hyper_model}


The hypergeometric null model provides a more precise approach for modelling fairness in rankings when the population is finite, and each draw is made without replacement. Unlike the binomial model, which assumes independence between draws, the hypergeometric model accounts for the dependency between successive draws as the population is depleted. As with the finite binomial model, the process begins with a ranking and an empty list. However, instead of using a fixed probability, the sampling probability dynamically adjusts based on the number of remaining candidates. Specifically, suppose that after $j$ draws, $y_j$ protected candidates have already been selected. At the $j+1$ draw, the probability of selecting a protected individual is given by the number of remaining protected candidates over the number of all remaining candidates, i.e. $\frac{n_p-y_{j}}{n-j}$, where $n_p$ is the number of protected candidates and $n$ is the size of the total candidate pool. 

Figure \ref{fig:diagram} (a) illustrates this process in the context of our conference admission example, where the goal is to select $k$ individuals from a pool of 10 candidates, 3 of whom belong to a protected group. Initially, the probabilities under the binomial and hypergeometric models are identical. However, as candidates are admitted, the probabilities under the hypergeometric model adapt dynamically to reflect the finite pool of applicants. For example, after admitting two candidates---one protected and one non-protected---the probability of admitting a protected candidate on the third draw becomes $\frac{1}{4}$, reflecting the two remaining protected candidates out of the total eight remaining applicants. Similarly, by the ninth draw, when only one candidate from each group remains, the probability of selecting the protected candidate rises to $\frac{1}{2}$. This dynamic adjustment aligns with the finite nature of the applicant pool, making the hypergeometric model more appropriate for scenarios such as this conference admission process. We now formally define this model:


\begin{definition}[Hypergeometric model]
A ranking $(x_1, x_2, \dots, x_n)$ with $n_p$ protected elements is said to be generated under the \emph{hypergeometric model} if we draw each element in the ranking uniformly at random from a population of size $n$ that contains exactly $n_p$ protected individuals. The sampling is performed without replacement, and the probability of selecting a protected individual adjusts dynamically based on the remaining population. Let $Y_k=\sum_{i=1}^k X_i$, representing the cumulative number of protected individuals in the top $k$ positions. Then $Y_k$ follows a hypergeometric distribution of the form $\emph{Hyp}(n,n_p,k)$.
\end{definition}

The hypergeometric model can also be interpreted as a random permutation of a population consisting of $n_p$ protected candidates and $n-n_p$ non-protected candidates. This equivalence arises because sampling without replacement inherently randomises the order of elements while maintaining the constraints of the original population. In contrast, the standard binomial model, which assumes sampling with replacement, does not correspond to a random permutation. Instead, each draw is independent of the others, and the total number of protected or non-protected individuals in the pool is not fixed, making it possible to deviate from the initial population proportions. The finite binomial model occupies an intermediate position: while it starts with near-independent sampling, it transitions to a constrained regime as the pool depletes. In this constrained phase, the process partially aligns with the hypergeometric model, reflecting some dependency among draws, though not fully converging to a random permutation. 


\paragraph{The hypergeometric model leads to more restrictive tests.} Figure \ref{fig:diagram} (b) bottom figure illustrates the two-sided 90\% confidence intervals (CIs) for the percentage of protected individuals drawn under this hypergeometric model, with $n=100$ candidates, of which $n_p=30$ are protected. It is clear that when the share of drawn individuals $\frac{k}{n}$ is small, the hypergeometric and the binomial distribution are very similar. In fact, it's a well known fact that if $ k $ and $ p $ are fixed, the hypergeometric distribution converges in distribution to the binomial as $ n \to \infty $ (with $ n_p = pn $). 

However, as $ k $ increases and becomes a larger fraction of $ n $ (e.g., $ \frac{k}{n} > 0.1 $), the two distributions start to diverge \cite{brunk1968teacher, johnson2005hypergeometric}. This is due to the dependency between draws in the hypergeometric case, which reduces the variance relative to the independent sampling assumed in the binomial case. The hypergeometric distribution becomes more concentrated around its mean, leading to tighter confidence intervals.  In particular, Uhlmann \cite{uhlmann1966vergleich} showed that for outcomes $ y $ below or above the mean $\bar y= pk $, the hypergeometric probability is smaller than the binomial \cite{dudley2007uhlmann}:  
\[
\mathbb P(H_k = y) \leq \mathbb P(B_k = y) \quad \text{for } y \leq pk - 1 \text{ and } y \geq pk + 1.
\]  
where $H_k$ and $B_k$ denote the random variables representing the number of protected individuals among the top-$ k $ positions respectively for the hypergeometric $\text{Hyp}(n, n_p, k) $ and for the binomial $ \text{Bin}(p, k) $ distribution.  As a result, the cumulative distribution function of the hypergeometric is lower than that of the binomial for $ y \leq pk - 1 $, leading to more restrictive fairness tests (see Section \ref{sec:measure}), and consequently demanding for higher fairness requirements when $k$ grows larger. 

\paragraph{The hypergeometric model is more accurate when the number of protected individuals is particularly small.} As emphasised in manufacturing standards for attribute acceptance sampling \cite{samohyl2018acceptance}, where it is used to decide whether to accept or reject a batch, the hypergeometric model is particularly useful when $n_p$ (or $n - n_p$) is small. For instance, consider a scenario where $n = 10$ and $n_p = 1$, meaning there is only one protected candidate in the population. In the hypergeometric model, as non-protected candidates are drawn, the probability of sampling the single protected candidate increases with each draw, reflecting the depletion of non-protected individuals. For example, after drawing 5 non-protected candidates, the probability of selecting the protected candidate becomes $\frac{1}{5}$. In contrast, under the finite binomial model, the probability of sampling the protected candidate remains fixed at $\frac{1}{n} = \frac{1}{10}$, regardless of the number of non-protected candidates already selected, which can lead to unrealistic outcomes.

\paragraph{The hypergeometric model considers fairness in the bottom positions.}
Finally, the hypergeometric distribution also ensures that fairness is respected in the bottom positions. Ensuring fairness at the bottom of a ranking is crucial in many real‑world applications where opportunities, resources, or risks are distributed across the entire ranking, not just the top positions. For instance, in hiring pipelines or college admissions, candidates ranked lower may still benefit from waiting lists or fallback options. Similarly, in public service delivery or housing allocation, fairness at the bottom helps prevent systemic disadvantage and ensures equitable treatment for under-represented groups.

Let $H_k\sim\mathrm{Hyp}(n,n_p,k)$ denote the number of protected individuals in the top‑$k$ positions, and let $W_t$ denote the number in the bottom‑$t$. By exchangeability of hypergeometric sampling we have $\mathbb P(H_k=h)=\mathbb P(W_k=h)$ for all values of $h$ and thus their fairness quantiles coincide. Moreover, since $H_k + W_{\,n-k}=n_p$ it follows that $H_k=n_p-W_{\,n-k}$, which shows that testing under‑representation in the top‑$k$ is equivalent to testing over‑representation in the bottom‑$(n-k)$, and a two‑sided test on one maps exactly to a two‑sided test on the other.  Consequently, verifying fairness at the top‑$k$ automatically guarantees fairness at the bottom‑$(n-k)$. We provide a formal proof of this statement in Appendix \ref{app:proof}.


Under the finite binomial model, by contrast, one tests whether the top‑$k$ count $Y_k$ follows $\mathrm{Bin}(k,f)$.  There is no exchangeability between head and tail, so the bottom‑$(n-k)$ count does not follow the same binomial law. Hence, passing a binomial test at the top carries no implication for the tail, and a separate bottom‑$(n-k)$ assessment is required.

\subsection{Testing for fairness}\label{sec:measure}

Building on the characteristics of fair rankings under random sampling, we now propose a method to assess whether an observed ranking is fair. While this section focuses on addressing the under-representation of the protected group, the proposed methods are versatile and can be adapted to other fairness objectives, such as mitigating over-representation or ensuring balanced representation across both protected and non-protected groups.

\paragraph{Single statistical test.} Consider our ranking $(x_1, x_2, \dots, x_n) $, where $ y_k = \sum_{i=1}^k x_i $ represents the cumulative number of protected individuals in the top $ k $ positions. To evaluate fairness, we compare the observed ranking to the hypergeometric model introduced earlier. A straightforward approach is to perform a single hypergeometric test to determine whether the observed number of protected individuals, $ y_k $, could plausibly result from $ k $ successive draws without replacement under the null model. 

Since we are concerned with under-representation of the protected group, we conduct a one-sided test to check whether the probability of observing $ y_k $ under the hypergeometric null model falls below a predefined significance level $ \alpha $. Formally, we assume under the null hypothesis that $ y_k $ is a realisation of the random variable $ Y_k $, which follows a hypergeometric distribution. Let $ F_k(\cdot) $ denote the cumulative distribution function (CDF) of $ Y_k $\footnote{$F_k(y)=\mathbb P(Y_k\le y_k)=\sum_{j=0}^{y} \frac{\binom{n_p}{j} \binom{n-n_p}{k-j}}{\binom{n}{k}} $}. We reject the null hypothesis if the p-value is smaller or equal than $\alpha$, i.e.:
\[
 \mathbb P(Y_k\le y_k) = F_k(y_k) \le \alpha.
\]
If $ H_0 $ is rejected, we conclude that the ranking prefix from position $ 1 $ to $ k $ is unfair. This approach provides a simple yet effective mechanism to evaluate fairness at a specific position in the ranking.

\paragraph{Multiple statistical tests.} In some cases, it is important to assess whether fairness holds not only at a single cutoff $k$ but simultaneously for every top‑$j$ subset with $1\le j\le k$. This is particularly relevant when analysing fairness throughout the entire ranking, as a single test at $ k = n $ would be uninformative---by definition, the total number of protected individuals sampled will be $ n_p $. However, challenges arise because fairness must be maintained across all top-$j$ subsets, inducing dependencies between the $k$ tests.

To control the family‑wise error rate at level $\alpha$, we replace the significance $\alpha$ by a stricter threshold $\alpha_c$ for each individual test.  Concretely, for each $j=1,2,\dots,k$ we test
$$
H_{0,j}:\;Y_j\sim \text{Hyp}(n,n_p,j)
\quad\text{vs.}\quad
H_{1,j}:\;Y_j\text{ is too small,}
$$
by computing the one‑sided p‑value $F_j(y_j)$.  We then reject the test for any $H_{0,j}$ for which $F_j(y_j)\le \alpha_c$. To find $\alpha_c$, we can observe that the probability of failing at least one of the $k$ tests corresponds to the complement of the probability of success across all tests:
\[
\mathbb P(\text{fail}) = 1 - \mathbb P(F_1(Y_1) > \alpha_c, \dots, F_k(Y_k) > \alpha_c).
\]
By defining $Z_k = \min_{i=1,\dots,k} F_i(Y_i)$, we can rewrite the failure probability as:
\[
\mathbb P(\text{fail}) = 1 - \mathbb P(Z_k > \alpha_c) = \mathbb P(Z_k \leq \alpha_c).
\]
The correction $\alpha_c$ is then chosen such that this probability is bounded by $\alpha$. More precisely, we choose the adjusted parameter to be the largest value that satisfies this condition, i.e.
$$
\alpha_c := \sup\{\gamma\in[0,1]:\mathbb P(Z_k\le \gamma)\le\alpha\}
$$
In other words, $\alpha_c$ is defined as the upper $\alpha$-quantile of the distribution of $Z_k$. Without this adjustment ($\alpha_c = \alpha $), the cumulative rejection probability increases significantly, as illustrated by the blue line in Figure \ref{fig:algo} (a).

In the binomial case, \cite{zehlike2017fa} addressed this problem by developing an algorithm that computes the probability of accepting all hypotheses under a given significance threshold analytically, using the inverse binomial CDF. They then employed a linear search algorithm to find  $\alpha_c $ such that the probability of rejecting at least one test equals  $\alpha $ \cite{zehlike2017fa, zehlike2020note}. However, applying this framework to the hypergeometric distribution is infeasible due to the increased complexity introduced by dependencies in the hypergeometric case. Consequently, we propose an alternative Monte Carlo-based algorithm for fairness evaluation under the hypergeometric null model.

\paragraph{The Monte Carlo algorithm.} 

\begin{figure*}[ht]
\begin{center}
\includegraphics[width=\textwidth]{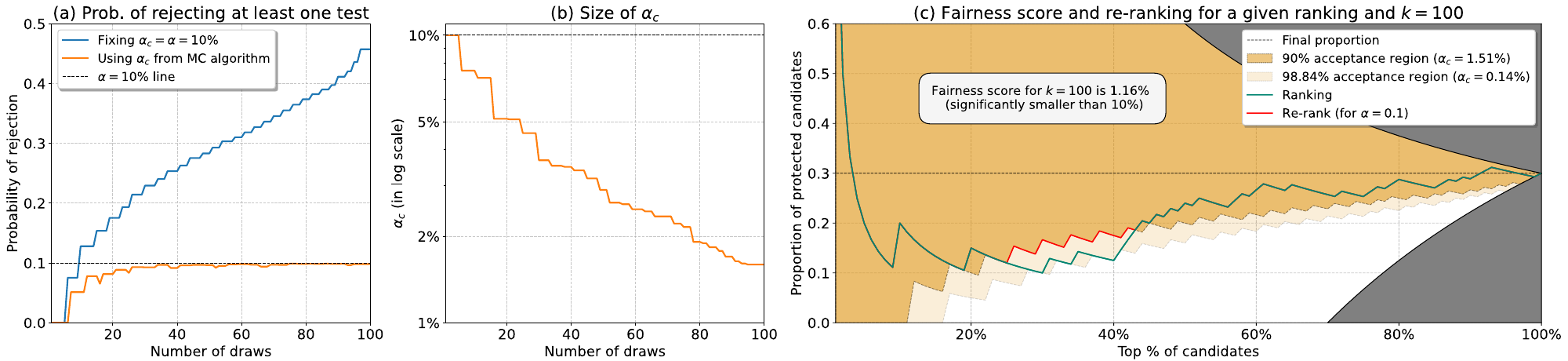}
\caption{\textbf{Evaluation of fairness under multiple hypothesis testing ($n=100,n_p=30$).} (a) Probability of failing at least one of the multiple tests versus the number of draws, shown for both unadjusted and adjusted significance levels. The adjusted $\alpha_c$ ensures that the overall failure probability is controlled at $\alpha=10\%$. (b) The size of the adjusted significance level ($\alpha_c$) as a function of the number of draws, demonstrating how $\alpha_c$ decreases with an increasing number of tests to maintain the desired overall failure probability. (c) Fairness evaluation and re-ranking for a ranking system with $ k = 100 $. The plot illustrates the proportion of protected individuals in the top of the ranking (green line) as a function of the percentage of individuals drawn. The fairness score $\tilde f$ for $ k = 100 $ is $1.16\%$, indicating substantial unfairness. The lighter orange region represents this metric by showing the first confidence interval (CI) where the observed ranking would pass the fairness test. For a higher significance level of $\alpha = 10\%$ (with an adjusted $\alpha_c = 1.59\%$), the ranking fails the test as it exceeds the boundaries of the orange CI. In this case, re-ranking the individuals adjusts the ranking to align with the boundary of the $10\%$ CI, as shown by the red curve.}
\Description{Figure 2 evaluates fairness under multiple hypothesis testing. Panel (a) shows the probability of failing at least one test as the number of top-k tests increases. The unadjusted case exceeds the significance level $\alpha = 0.1$, while the adjusted case controls the error. Panel (b) shows how the corrected significance level $\alpha_c$ decreases with the number of tests. Panel (c) displays a fairness evaluation on a ranking with $n=100$ and $n_p=30$, where the observed curve violates the acceptance region corresponding to $\alpha_c = 1.53\%$. The re-ranked curve (red) restores fairness at level $\alpha = 10\%$.}
\label{fig:algo}
\end{center}
\end{figure*}

Due to the dependencies among the random variables $Y_i$, it is challenging to compute the distribution of $Z_k$ analytically.  We instead estimate it using the following Monte Carlo procedure:
\begin{enumerate}
    \item Generate $N_e$ random ranking permutations of length $n$ containing exactly $n_p$ protected individuals and $n-n_p$ non protected ones. 
    \item For each ranking and for each position $i = 1, \dots, k$, compute the cumulative sum $y_i$, and evaluate $F_i(y_i)$ either analytically (using the hypergeometric CDF) or empirically (using quantile positions relative to other generated rankings).
    \item Compute $z_k = \min_{i=1,\dots,k} F_i(y_i)$ for each ranking.
    \item Estimate the distribution of $Z_k$ from the generated $z_k$ values and compute $q_{Z_k}(\alpha)$ to find $\alpha_c$.
\end{enumerate}

The choice of $N_e$, the number of random rankings, is crucial and should be a number sufficiently large to ensure the convergence of the algorithm. A brief discussion regarding the algorithm's computational cost and convergence can be found in Appendix \ref{app:convergence}. 

Notice that the Monte Carlo algorithm enables quick recalculations of the adjusted significance level $\alpha_c$ for varying values of $k$ by simply modifying the index over which the minimum is applied. Although the algorithm is demonstrated here using the hypergeometric null model, it can also be adapted to compute the corrected parameter for the finite binomial model by modifying the generated random rankings. When applied to the binomial case, this approach is often faster than the method proposed in \cite{zehlike2017fa}, as it avoids relying on a linear search to determine the adjusted parameter.

Figure \ref{fig:algo} (a) shows the result of the $\alpha_c$ adjustment where the orange line demonstrates that the overall probability of rejection remains controlled and close to the target significance level $\alpha$. Panel (b) shows the adjusted significance level $\alpha_c$ obtained using our Monte Carlo as a function of the number of draws. When the probability of rejection is zero (in the first draws), we set $\alpha_c=\alpha$. It is clear that the size of the adjusted parameter is monotonically decreasing as more tests are performed.

\subsection{A probabilistic fairness metric}


The Monte Carlo approach that has been used to define $Z_k$ can also be used to define a fairness metric for a given ranking. Specifically, given an observed ranking $(\tilde{x}_1, \dots, \tilde{x}_n)$, we calculate the cumulative sums $\tilde{y}_i = \sum_{j=1}^i \tilde{x}_j$ for $i = 1, \dots, k$. Using these values, we compute $\tilde{z}_k = \min_{i=1,\dots,k} F_i(\tilde{y}_i)$, where $F_i$ is the CDF calculated either analytically or empirically. The fairness metric, $\tilde{f}_k$, is then defined as the probability that $Z_k \leq \tilde{z}_k$ under the null distribution:
\[
\tilde{f}_k = \mathbb P(Z_k \leq \tilde{z}_k).
\]
In practice, this p-value is computed by evaluating the proportion of random rankings generated under the null model where $z_k$ is less or equal than $\tilde{z}_k$. This metric reflects the likelihood of observing a ranking as extreme as the given one under the fair hypergeometric null model, and therefore a smaller $\tilde{f}_k$ indicates greater deviation from fairness. Figure \ref{fig:algo} (c) illustrates the score of our fairness metric applied to a scenario where the observed ranking exhibits significant unfairness. 

\subsection{The re-ranking strategy}\label{sec:rerank}

If our ranking fails the fairness test for a given $\alpha$ (i.e., $\tilde{f}_k < \alpha$), it indicates that the ranking does not meet the fairness constraints. In such cases, we might decide that re-ranking becomes necessary to ensure compliance with fairness standards. The re-ranking strategy we adopt aligns with the FA*IR algorithm presented in \cite{zehlike2017fa}, with a modification to replace the binomial quantile function with the hypergeometric quantile function. This algorithm is proven to solve the fair top-$k$ ranking problem, satisfying four key properties:
\begin{enumerate}
    \item \textbf{In-group monotonicity:} The relative ordering of items within the same demographic group is preserved.
    \item \textbf{Ranked group fairness:} The ranking ensures fair representation of protected groups across the top-$k$ positions, satisfying the fairness constraints defined by $\alpha$.
    \item \textbf{Optimal selection utility:} Among all rankings that meet the fairness constraints, the algorithm selects the ranking that maximizes utility.
    \item \textbf{Maximised ordering utility:} Subject to the first three properties, the ranking is adjusted to retain the original order of candidates as much as possible.
\end{enumerate}

The re-ranking algorithm proceeds by sequentially checking fairness at each position in the ranking and making minimal adjustments whenever a position fails the fairness test. Although the process relies on quantile functions, the re-ranked output is unique for any fixed ranking and parameter $\alpha$, making the algorithm fully deterministic. Further implementation details are provided in Appendix~\ref{app:rerank}. Figure \ref{fig:algo} (c) illustrates how the re-ranking procedure modifies the top-$k$ ranking proportions when applied with $\alpha = 10\%$.

\subsection{University admission case study}\label{sec:university}

To illustrate our methodology and empirically highlight the differences between the hypergeometric and finite binomial models, we present a case study based on real-world data. The dataset contains records of students from an elite university that recently expanded low-income enrolment via a government scholarship program \cite{alvarez2022college}. Students in this dataset are ranked based on their GPA, and are characterized by additional attributes such as socioeconomic status (SES) and gender. 
We assess fairness by setting a significance level of $\alpha=10\%$ (a standard choice in statistical testing) across three distinct cohorts using different protected attributes, and show the results in Figure \ref{fig:uni}. In each case, we visualise the cumulative proportion of protected candidates across ranking positions, comparing it to the acceptance regions (i.e., confidence intervals at level $1-\alpha$ for $k$ tests) computed under both the hypergeometric and finite binomial models. Rankings that fall outside these intervals at any position $j \leq k$ are considered statistically unfair.

\paragraph{Cohort (a): Under-representation of low-SES students.}
The protected group in this cohort comprises students with low socioeconomic status. As shown in Figure \ref{fig:uni} (left), these students are under-represented---except at the very top---relative to their overall proportion ($58\%$). Therefore, to test whether this under-representation is statistically significant, we apply $k=n$ ($=56$) one-sided hypergeometric tests on the top-$j$ subsets of the ranking, for $j=1,\dots,n$, with a total error rate of $\alpha$ (with adjusted parameter $\alpha_c=1.88\%$).
The resulting fairness score is $4.11\%$, below $\alpha$, indicating unfairness. The finite binomial model also flags the ranking as unfair, but with a higher fairness score, suggesting a less extreme deviation. 

\paragraph{Cohort (b): Proportional representation of low-SES students.}
The protected group in this cohort also consists of low-SES students. In this case, they are represented in proportion to their share in the population ($32\%$), as shown in Figure \ref{fig:uni} (middle). Therefore, we test the significance of this representation using a two-sided fairness test over the full rank. Both the hypergeometric and finite binomial models yield fairness scores above $70\%$, exceeding $\alpha$, indicating no significant deviation and thus a fair outcome.

\paragraph{Cohort (c): Over-representation of female students.}
The protected group in this cohort consists of female students. 
As shown in Figure \ref{fig:uni} (right), they are over-represented across the ranking, except at the very top. When the ranking covers a large proportion of candidates ($\approx 70\%$ or more), their representation overlaps with the upper boundary. This means all female candidates have already appeared in higher positions, leaving the remainder of the ranking dominated by male candidates.
The hypergeometric model, which accounts for finite population effects, flags this as unfair, with a fairness score of $1.68\%$. In contrast, the finite binomial model does not detect unfairness, as it does not capture effects in the lower ranks.

\begin{figure*}[ht]
\begin{center}
\includegraphics[width=\textwidth]{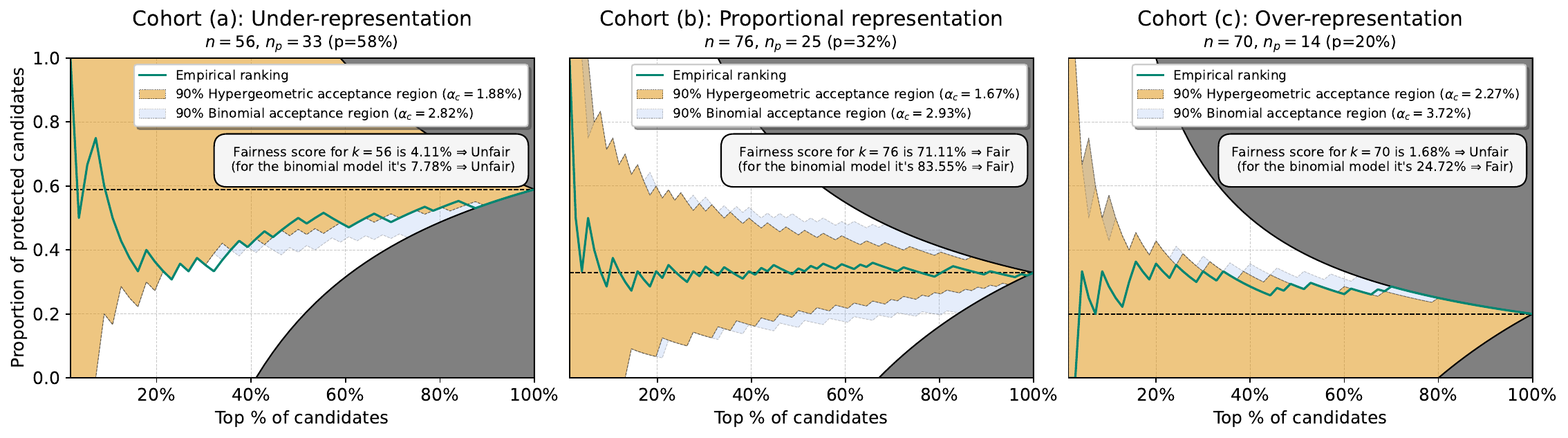}
\caption{\textbf{Fairness assessment of real-world university admission rankings using hypergeometric and finite binomial models.} The three panels show the cumulative proportion of protected candidates (green line) in the top positions of three different cohort rankings, compared to the acceptance regions (confidence intervals with level $1-\alpha = 0.9$) derived from the hypergeometric (orange) and finite binomial (light blue) models. In each case, the fairness score is reported for the full ranking ($k = n$). The left panel shows under-representation of low-SES students; both models flag it as unfair, though the finite binomial model yields a less extreme result. The middle panel shows an almost perfectly fair ranking. The right panel illustrates the over-representation of female candidates: when the ranking covers a large proportion of candidates ($\approx 70\%$) all female candidates have been selected, meaning that the bottom 30\% can only include male students. Here, only the hypergeometric model flags the ranking as unfair.}
\Description{Three plots comparing the proportion of protected candidates in university admission rankings to confidence intervals under hypergeometric and finite binomial models. Each panel corresponds to a different cohort. Panel (a) shows under-representation of low-SES students, flagged as unfair by both models. Panel (b) shows a balanced distribution for low-SES students, not flagged as unfair. In Panel (c) the hypergeometric model detects unfairness in the bottom ranks, while the finite binomial model isn't able to capture this.}
\label{fig:uni}
\end{center}
\end{figure*}

\section{Beyond demographic parity}\label{sec:aa}

In our conference admission example in Section \ref{sec:fair_generative}, we assumed that the underlying goal was to achieve demographic parity with respect to the candidate pool. However, this is not always the objective in practice. In situations where the pool itself reflects underlying societal imbalances, interventions may need to extend beyond demographic representation. For example, in STEM disciplines, where the candidate pool is often predominantly male due to historical and structural inequalities, achieving gender representation closer to parity is often a common objective. Similarly, in international academic conferences, participants from low-income countries are frequently under-represented due to economic barriers or resource limitations. In such cases, we may prioritise the inclusion of these groups to promote equity and global representation, even when their share in the candidate pool is relatively small.

\subsection{Affirmative action policies} Selection goals often vary depending on the context, the characteristics of the group in question, and broader societal objectives, leading to diverse implementations of affirmative action (AA) policies. Affirmative action refers to `a set of ethically driven policies aimed at providing special opportunities to a historically disadvantaged group in order to enable the members of this group to compete with their more privileged peers' \cite{chowdhury2023heterogeneity}. 
The methods of implementation can differ in intensity, ranging from measures that indirectly influence outcomes to those that mandate specific results. \cite{harrison2006understanding} provide a comprehensive framework for understanding the range of affirmative action policies, with an application to the labour market. Such policies encompass `soft' measures that influence the distribution of outcomes indirectly---such as training, outreach programmes, or action plans that affect the chances of members of designated groups achieving certain outcomes---and also `hard' programmes in the form of institutionalised quotas or reservations that mandate the distribution of outcomes directly.

While affirmative action policies aim to reduce intergroup inequalities, they are not without criticism. Key concerns include the potential reinforcement of social divisions and stigmatisation by prioritising certain groups, the perception of reverse discrimination, and the risk of benefiting privileged individuals within target groups while neglecting intersecting inequalities. 
For a systematic review of both the benefits and challenges of AA policies, see \cite{schotte2023does}.


We now turn to the integration of affirmative action policies within fair generative processes. Such policies often aim to improve access for historically marginalised groups by setting admission quotas, awarding additional points in examinations, or lowering entrance thresholds \cite{schotte2023does, mathioudakis2020affirmative}. In this work, we focus specifically on proportional quotas, which aim to keep the proportion of protected individuals in the selected group close to a predetermined target. In the context of university admissions discussed in Section~\ref{sec:university}, these quotas often take the form of explicit seat reservations based on caste, race, or socioeconomic status \cite{gururaj2021affirmative}.


\subsection{A weighted hypergeometric model}

 In the finite binomial model defined in Section \ref{sec:fair_generative}, affirmative action policies, and in particular quotas, can be implemented by fixing the sampling probability of the protected group to a certain target proportion $\rho$. This ensures that the model behaves as if each top-$k$ ranking prefix has an expected proportion of $\rho$ protected individuals, as long as the candidate pool allows for it. By contrast, the hypergeometric model introduced earlier naturally converges to demographic parity, limiting its flexibility in enforcing quotas that differ from the demographic proportions.

To address this limitation, we can extend the hypergeometric model by introducing weights to the sampling process. Specifically, we define a relative weighting factor, or \emph{odds ratio}, $\omega$, which determines the relative likelihood of selecting a protected individual over a non-protected one when both are present in the candidate pool. This adjustment allows us to control the representation of protected individuals and align it with predefined quotas. 

\begin{definition}[Non-central hypergeometric model]
A ranking $(x_1, x_2, \dots, x_n)$ with $n_p$ protected elements is said to be generated under the \emph{non-central hypergeometric model} if we draw each element in the ranking based on weighted probabilities from a population of size $n$ that contains exactly $n_p$ protected individuals. Specifically, the probability of selecting a protected individual is determined by a weighting factor, or \emph{odds ratio}, $\omega$, which represents the relative likelihood of selecting a protected individual compared to a non-protected one. The sampling is performed without replacement, and the probability of selecting a protected individual adjusts dynamically based on the remaining population. 

When $\omega > 1$, protected individuals are favoured, increasing their likelihood of being sampled into higher-ranked positions. Conversely, when $\omega < 1$, non-protected individuals are favoured. When $\omega = 1$, the sampling reduces to the classical hypergeometric model. Let $Y_k=\sum_{i=1}^k X_i$, representing the cumulative number of protected individuals in the top $k$ positions. Then $Y_k$ follows a Wallenius' non-central hypergeometric distribution \cite{wallenius1964biased} of the form $\emph{WncHyp}(n,n_p,k,\omega)$.
\end{definition}

\paragraph{How to choose the odds ratio.} To ensure that the representation of the protected group matches the desired proportion $\rho$, we need to choose the correct value of $\omega$. Specifically, we choose the value for which the odds of picking a protected candidate in the first draw is equal to $\rho$. This implies setting:
\[\mathbb P(X_1=1)=\frac{\omega n_p}{\omega n_p + (n-n_p)} = \rho\]
Rearranging this equation gives the formula for $\omega$:
\[ \omega = \frac{\rho}{1-\rho}\cdot \frac{n-n_p}{n_p} = \frac{\rho}{1-\rho}\cdot \frac{1-p}{p}\]
where $p=\frac{n_p}{n}$ is the proportion of protected individuals in the candidate pool. It is evident that when $\rho=p$---corresponding to demographic parity---the odds ratio simplifies to $\omega=1$, restoring the classical hypergeometric model. Additionally, the binomial model with sampling probability $f=\rho$ can be viewed as a weighted sampling model with the same value of $\omega$, but with replacement instead of without replacement.

\begin{figure}[ht]
\begin{center}
\includegraphics[width=0.45\textwidth]{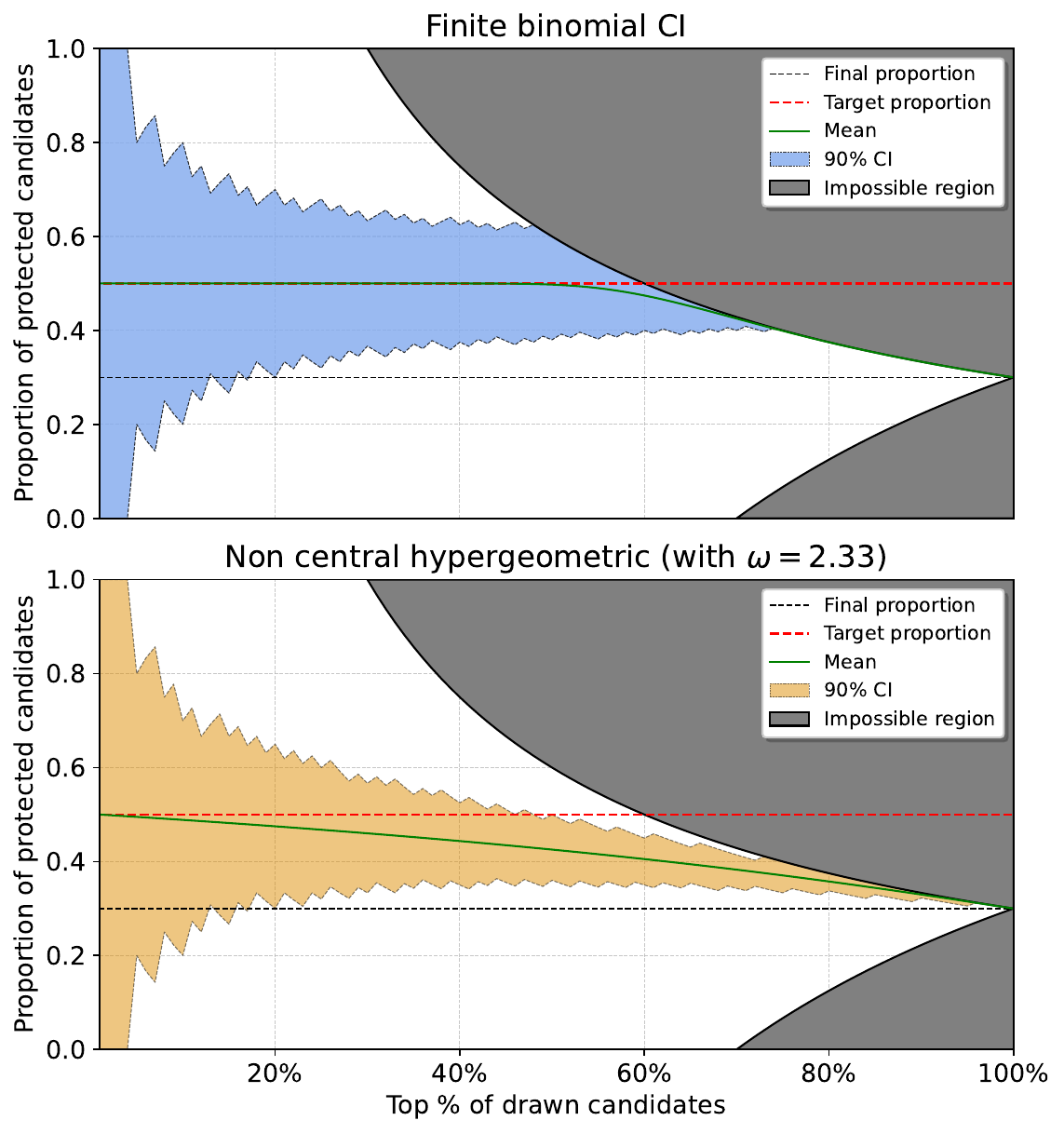}
\caption{\textbf{Binomial vs. Non-central hypergeometric confidence intervals.} Two-sided $90\%$ confidence intervals for the proportion of protected individuals drawn from the finite binomial and the non-central hypergeometric distributions, with $n=100$, $n_p=30$, $\omega=2.33$, target proportion $\rho=0.5$ and fairness probability $f$ equal to $\rho$. The plot visualizes the distribution of possible proportions of protected individuals as the percentage of drawn individuals increases.}
\Description{ Figure comparing two-sided 90\% confidence intervals for the proportion of protected individuals selected under the finite binomial and non-central hypergeometric models. The x-axis shows the cumulative percentage of individuals drawn from a population of size $n=100$, with 30 protected individuals. The target fairness proportion is $\rho=0.5$, and the odds ratio in the hypergeometric model is $\omega=2.33$, with a fairness probability $f$ equal to $\rho$. The finite binomial model enforces the target proportion strictly across all sample sizes, resulting in symmetric and narrower confidence intervals that remain centred at $\rho$. In contrast, the non-central hypergeometric model achieves the target proportion at the top of the ranking but gradually converges to demographic parity as more candidates are drawn. Its confidence intervals are wider and asymmetric.}
\label{fig:diagram_aa}
\end{center}
\end{figure}

\paragraph{The differences between the models.} The finite binomial model strictly enforces the target proportion, ensuring this proportion is maintained throughout the sampling process, as long as there are enough candidates from the required group. In contrast, the non-central hypergeometric model ensures the target proportion $\rho$ is achieved at the first draw but gradually converges to demographic parity as additional individuals are sampled. 

This difference is evident in Figure \ref{fig:diagram_aa}, where the finite binomial model (upper plot) maintains strict alignment with the target proportion, while the hypergeometric model (lower plot) exhibits a smoother convergence toward parity. Although the non-central hypergeometric model does not strictly adhere to the target proportion $\rho$, it reduces the risk of boundary effects where the remaining candidates in the ranking belong solely to one group. In the finite binomial model, such boundary scenarios are very likely when sampling a large proportion of individuals.

When $\rho>p$ (as in the plot above), the hypergeometric approach reduces the risk of reverse discrimination against the non-protected group, thus avoiding the marginalisation of other disadvantaged groups that do not belong to the protected category. Conversely, when $\rho<p$, it mitigates the risk of under-representation of the protected group, which may arise if quotas are deliberately set below the overall proportion of the protected group. For instance, a policy imposing a fixed quota for women in leadership positions may result in under-representation if women comprise a larger proportion of the candidate pool \cite{LSE2014}. By addressing both reverse discrimination and under-representation, the hypergeometric model may increase the acceptability and likelihood of enacting affirmative action policies that aim for equitable representation.


Finally, our framework enables a quantitative assessment of how different quota policies impact both protected and non-protected groups. First, it allows us to assess whether a target proportion is statistically plausible under fair random selection, helping justify quota decisions. Second, we can compare the effects of strict quotas (enforcing a minimum $\rho$) versus softer quotas (achieving $\rho$ in expectation at the first draw). Third, it quantifies group advantage by estimating the most likely weight $\omega$ under which the observed ranking would be considered fair.

\section{Conclusions}\label{sec:disc}

In this study, we introduced \hyperfair, a framework for assessing and enforcing fairness in ranking systems. By leveraging the hypergeometric and non-central hypergeometric distributions, our approach addresses critical challenges in achieving fair representation within finite candidate pools. Unlike traditional binomial models \cite{yang2017measuring}, \hyperfair~better reflects real-world constraints, particularly in scenarios where the number of protected candidates is small or the proportion of selections is large relative to the population. 
We also demonstrated that the hypergeometric model inherently accounts for fairness across the bottom positions, offering a more comprehensive perspective on representation throughout the entire rank. These benefits were illustrated analytically and through a real-world case study on university admissions.

We additionally introduced a weighted generative model that supports the implementation and evaluation of affirmative action policies, allowing target quotas to be enforced at the top of the rank while gradually converging toward demographic parity as more candidates are selected. This ensures equitable representation throughout the entire rank and enhances the public acceptability of fairness interventions.
To support efficient fairness evaluations, we also developed a Monte Carlo-based algorithm capable of handling multiple statistical tests across different positions in the ranking. 
Finally, throughout the paper we identified the fundamental differences between the hypergeometric and binomial models. In Appendix \ref{app:choice}, we provide guidelines on the conditions under which each model is more suitable for fairness evaluations, in contexts where demographic parity is the objective.  


While \hyperfair~provides a powerful framework for evaluating fairness in rankings, it is important to emphasise its limitations. 
First, fairness interventions should not be applied in isolation. Methods such as re-ranking or affirmative action require human oversight to ensure that algorithmic decisions align with ethical and contextual considerations.
Second, our approach is only applied to pre-computed rankings and does not mitigate biases that may arise during earlier phases of candidate evaluation.

Future work should extend this framework to multiclass settings and incorporate multiple attributes to address intersectional discrimination. Another important direction is to integrate the underlying scores (e.g., GPAs) that generated the ranking into the sampling model. Moving beyond relative ordering enables more accurate fairness assessments---for instance, preventing large exposure differences between similarly scored candidates---and supports alternative fairness definitions, such as equal opportunity.
Finally, our model offers a promising basis for evaluating the validity of other existing fairness metrics, such as those presented in \cite{schumacher2024properties}.

To conclude, this work advances the methodological foundations of fairness in ranking by introducing models that remain robust even when the share of selected candidates is large or the overall pool is imbalanced---a common, yet under-addressed challenge. By fostering rigorous fairness assessments, we lay the foundation for developing fairer ranking algorithms, and in turn, fairer search engines and recommendation systems.

\paragraph{Reproducibility.} All code used in this paper is available online at \url{https://github.com/CSHVienna/hyper_fair}.

\begin{acks}
    This research work was funded by the European Union under the Horizon Europe MAMMOth project, Grant Agreement ID: 101070285. L.E.N. received support from the Vienna Science and Technology Fund WWTF under project No. ICT20-079. We are also grateful to the anonymous reviewers for their suggestions on this paper.



    \paragraph{Empirical data.} We thank Professor María José Álvarez-Rivabulla and her collaborators for providing the anonymized student data. To safeguard students' privacy, the data is not publicly available.  Inquiries about data collection, access, or maintenance should be directed to \href{mailto:mj.alvarez@uniandes.edu.co}{mj.alvarez@uniandes.edu.co}.
\end{acks}

\bibliographystyle{acmstyles/ACM-Reference-Format}
\bibliography{main}

\newpage
\appendix

\section{Boundaries of ranking proportions}\label{app:boundaries}

In this section, we explain the derivation of the boundaries of the impossible regions shown in all the figures. As discussed in Section \ref{sec:fair_generative}, these boundaries correspond to two extreme scenarios. Let $x = \frac{k}{n}$ represent the proportion of individuals considered up to rank $k$. The first scenario assumes all protected candidates occupy the top positions. We will demonstrate why, in this case, $p_k = 1$ until $x = p$, after which $p_k$ follows the curve $\frac{p}{x}$.

To begin, recall that the population consists of $n_p$ protected candidates and $n - n_p$ non-protected candidates, with $p = \frac{n_p}{n}$ denoting the proportion of protected candidates in the population. The proportion of protected individuals observed up to position $k$ is given by $p_k = \frac{y_k}{k}$, where $y_k$ is the cumulative number of protected individuals selected up to rank $k$.

In the extreme scenario where all protected candidates occupy the top ranks, the first $n_p$ positions are filled exclusively with protected candidates. Therefore, until $k = n_p$ (i.e., when $x = \frac{n_p}{n} = p$), $p_k = 1$, as all selected individuals are protected.

Beyond this point ($k > n_p$), no more protected candidates can be selected. The proportion $p_k$ will then decrease as more non-protected individuals are added to the ranking. To determine how $p_k$ decreases, suppose $t$ additional individuals are selected. In this case, $x = \frac{k + t}{n}$, and the proportion of protected candidates becomes:
\[
p_k = \frac{n_p}{k + t} = \frac{n_p}{n} \cdot \frac{n}{k + t} = \frac{p}{x}.
\]

Using a similar reasoning, we can analyse the second extreme scenario, where all protected candidates are relegated to the lowest positions. In this case, $p_k = 0$ until $x = 1 - p$, as no protected individuals are selected before this threshold. After this point, $p_k$ increases according to the curve $1 - \frac{1 - p}{x}$.

\begin{figure}[ht]
\begin{center}
\includegraphics[width=0.48\textwidth]{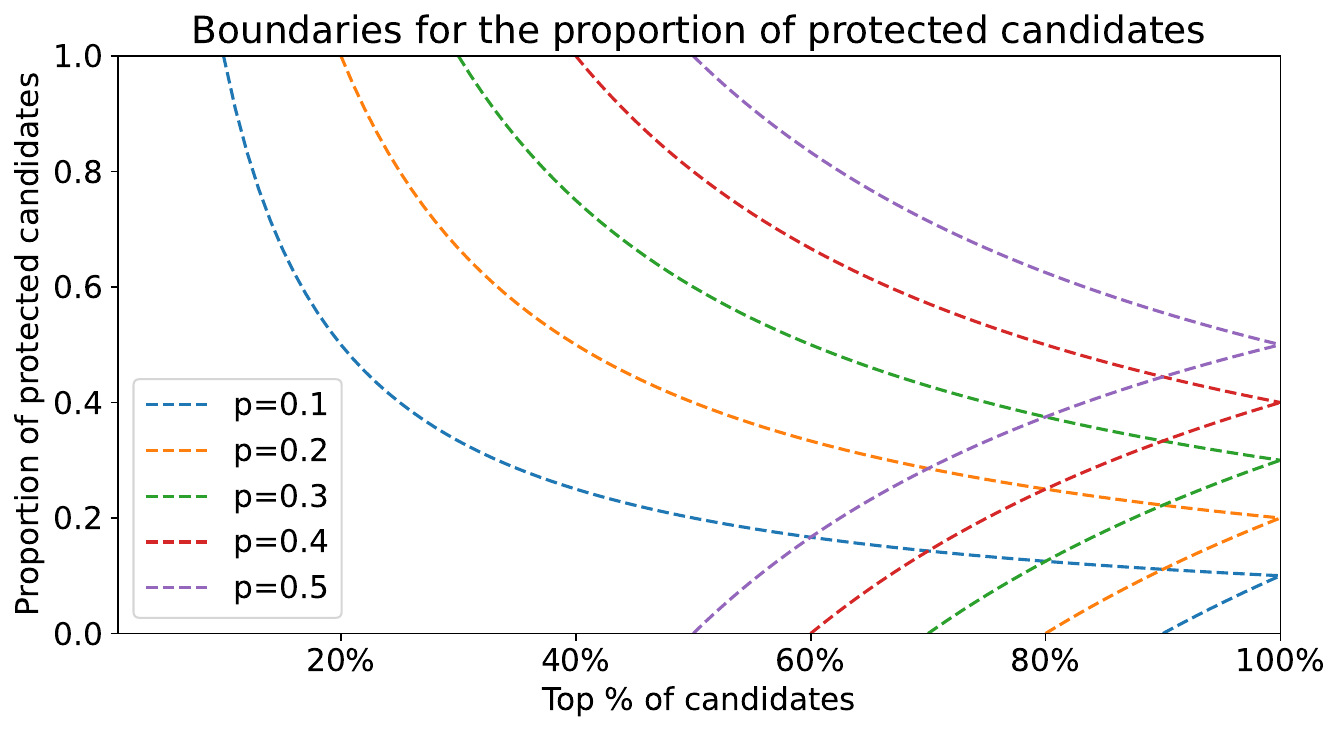}
\caption{\textbf{Mathematical boundaries for different values of the total proportion of protected candidates $p$.} }
\Description{ Figure showing boundary regions for fairness evaluations at varying values of the protected group proportion $p$, ranging from 0.1 to 0.5.}
\label{fig:boundaries}
\end{center}
\end{figure}

Figure \ref{fig:boundaries} illustrates the boundaries for different values of $p$. For small values of $p$, these boundaries play a critical role, as they restrict the range of achievable proportions in the plot, highlighting the importance of considering such constraints when analysing ranking fairness.

\section{Proof of fairness in the bottom positions}\label{app:proof}

\begin{proposition}
\label{prop:bottom_fairness}
Let $ H_k \sim \text{Hyp}(n, n_p, k) $ and $ W_{n-k} = n_p - H_k \sim \text{Hyp}(n, n_p, n - k) $ be the number of protected individuals among the top-$ k $ and bottom-$ n - k $ positions of a ranking, respectively. Then, under a two-sided fairness test at significance level $ \alpha $, the ranking passes the fairness test for the top-$ k $ if and only if it passes the fairness test for the bottom-$ n - k $.
\end{proposition}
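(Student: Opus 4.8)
The plan is to reduce the bottom-$(n-k)$ test entirely to the top-$k$ test by exploiting the exact deterministic identity $W_{n-k} = n_p - H_k$ together with the fact, already recorded in the excerpt, that under the hypergeometric (random-permutation) null model the bottom $n-k$ positions also form a uniformly random subset, so that $W_{n-k} \sim \text{Hyp}(n, n_p, n-k)$. The affine map $h \mapsto n_p - h$ is an order-reversing bijection between the supports of $H_k$ and $W_{n-k}$, and this is the structural fact that drives the whole argument: under-representation at the top is literally the same event as over-representation at the bottom.

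First I would fix a precise, symmetric definition of the two-sided test at level $\alpha$. Writing $h$ for the observed protected count in the top-$k$, declare that the ranking \emph{passes at the top} if and only if both tail probabilities exceed $\alpha/2$, i.e. $\mathbb{P}(H_k \le h) > \alpha/2$ and $\mathbb{P}(H_k \ge h) > \alpha/2$. Symmetrically, with $w = n_p - h$ the observed count in the bottom-$(n-k)$, the ranking \emph{passes at the bottom} if and only if $\mathbb{P}(W_{n-k} \le w) > \alpha/2$ and $\mathbb{P}(W_{n-k} \ge w) > \alpha/2$. Both tests are evaluated inclusively at the observed value, which is what makes the forthcoming identities exact.

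Next I would rewrite each bottom-tail probability as a top-tail probability using $W_{n-k} = n_p - H_k$ and the equality $w = n_p - h$:
\[
\mathbb{P}(W_{n-k} \le w) = \mathbb{P}(H_k \ge n_p - w) = \mathbb{P}(H_k \ge h),
\qquad
\mathbb{P}(W_{n-k} \ge w) = \mathbb{P}(H_k \le n_p - w) = \mathbb{P}(H_k \le h).
\]
This shows that the left-tail condition at the bottom is identical to the right-tail condition at the top, and the right-tail condition at the bottom is identical to the left-tail condition at the top. Hence the conjunction defining the bottom test is exactly the conjunction defining the top test, merely with the two clauses interchanged, and I would conclude that the bottom-$(n-k)$ test passes if and only if the top-$k$ test passes.

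The main obstacle I anticipate is bookkeeping rather than conceptual: pinning down the discrete boundary behaviour so that the inequalities line up without an off-by-one shift. Because $H_k$ is integer-valued, the identity $\{H_k \ge n_p - w\} = \{H_k \ge h\}$ holds exactly only because $w = n_p - h$ is a genuine equality of integers, so I must be careful to use inclusive tails $\mathbb{P}(\cdot \le h)$ and $\mathbb{P}(\cdot \ge h)$ consistently rather than strict versions. I would also note at the outset that the proof invokes only the deterministic relation and exchangeability, so beyond identifying $W_{n-k}$ as hypergeometric no distributional computation is needed.
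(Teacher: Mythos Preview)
Your proof is correct and rests on the same structural fact as the paper's: the deterministic identity $W_{n-k}=n_p-H_k$ together with the hypergeometric law for both variables. The paper, however, formulates the two-sided test via quantile-based acceptance regions, $Q_{\alpha/2}^{(k)}\le H_k\le Q_{1-\alpha/2}^{(k)}$, and then argues that the affine map $h\mapsto n_p-h$ carries the top-$k$ quantiles to the bottom-$(n-k)$ quantiles, i.e.\ $Q_{\alpha/2}^{(n-k)}=n_p-Q_{1-\alpha/2}^{(k)}$ and $Q_{1-\alpha/2}^{(n-k)}=n_p-Q_{\alpha/2}^{(k)}$. Your route via inclusive tail probabilities is more elementary and arguably cleaner: the identities $\mathbb{P}(W_{n-k}\le w)=\mathbb{P}(H_k\ge h)$ and $\mathbb{P}(W_{n-k}\ge w)=\mathbb{P}(H_k\le h)$ hold exactly for every integer $h$, so you sidestep the discrete-quantile bookkeeping (which quantile convention, and whether the complementarity identity survives without an off-by-one shift) that the paper's version leaves implicit. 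The trade-off is only one of framing: the paper's quantile language matches how the test is operationalised elsewhere in the text, while your p-value formulation makes the equivalence of the two rejection events immediately transparent.
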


\begin{proof}
Let $ H_k \sim \text{Hyp}(n, n_p, k) $ represent the number of protected individuals in the top-$ k $ positions, and let $ W_{n-k} = n_p - H_k $ represent the number of protected individuals in the remaining $ n - k $ positions. By the properties of the hypergeometric distribution, we have:
\[
W_{n-k} \sim \text{Hyp}(n, n_p, n - k).
\]

Let $ Q_{\gamma}^{(k)} $ denote the $ \gamma $-quantile of the distribution of $ H_k $, and $ Q_{\gamma}^{(n-k)} $ denote the $ \gamma $-quantile of the distribution of $ W_{n-k} $. The two-sided fairness test accepts the ranking at level $ \alpha $ if:
\[
Q_{\alpha/2}^{(k)} \leq H_k \leq Q_{1 - \alpha/2}^{(k)}.
\]

Rewriting in terms of $ W_{n-k} = n_p - H_k $, we get:
\[
n_p - Q_{1 - \alpha/2}^{(k)} \leq W_{n-k} \leq n_p - Q_{\alpha/2}^{(k)}.
\]

Due to the complementarity of the hypergeometric distribution, the quantiles satisfy:
\[
Q_{\alpha/2}^{(n-k)} = n_p - Q_{1 - \alpha/2}^{(k)} \quad \text{and} \quad Q_{1 - \alpha/2}^{(n-k)} = n_p - Q_{\alpha/2}^{(k)}.
\]

Substituting these back yields:
\[
Q_{\alpha/2}^{(n-k)} \leq W_{n-k} \leq Q_{1 - \alpha/2}^{(n-k)},
\]
which is precisely the condition that the bottom-$ n - k $ positions pass the two-sided fairness test. Therefore, the fairness of the top-$ k $ positions is equivalent to the fairness of the bottom-$ n - k $ positions under this test.
\end{proof}

\section{Complexity and convergence of Monte Carlo algorithm}\label{app:convergence}

In this section, we analyse the computational complexity and convergence properties of the Monte Carlo algorithm used to compute the adjusted $\alpha_c$ in the context of multiple statistical tests.

The Monte Carlo algorithm can operate in two modes depending on how the cumulative distribution function (CDF) of the hypergeometric distributions are evaluated:

\begin{figure*}[ht]
\begin{center}
\includegraphics[width=0.9\textwidth]{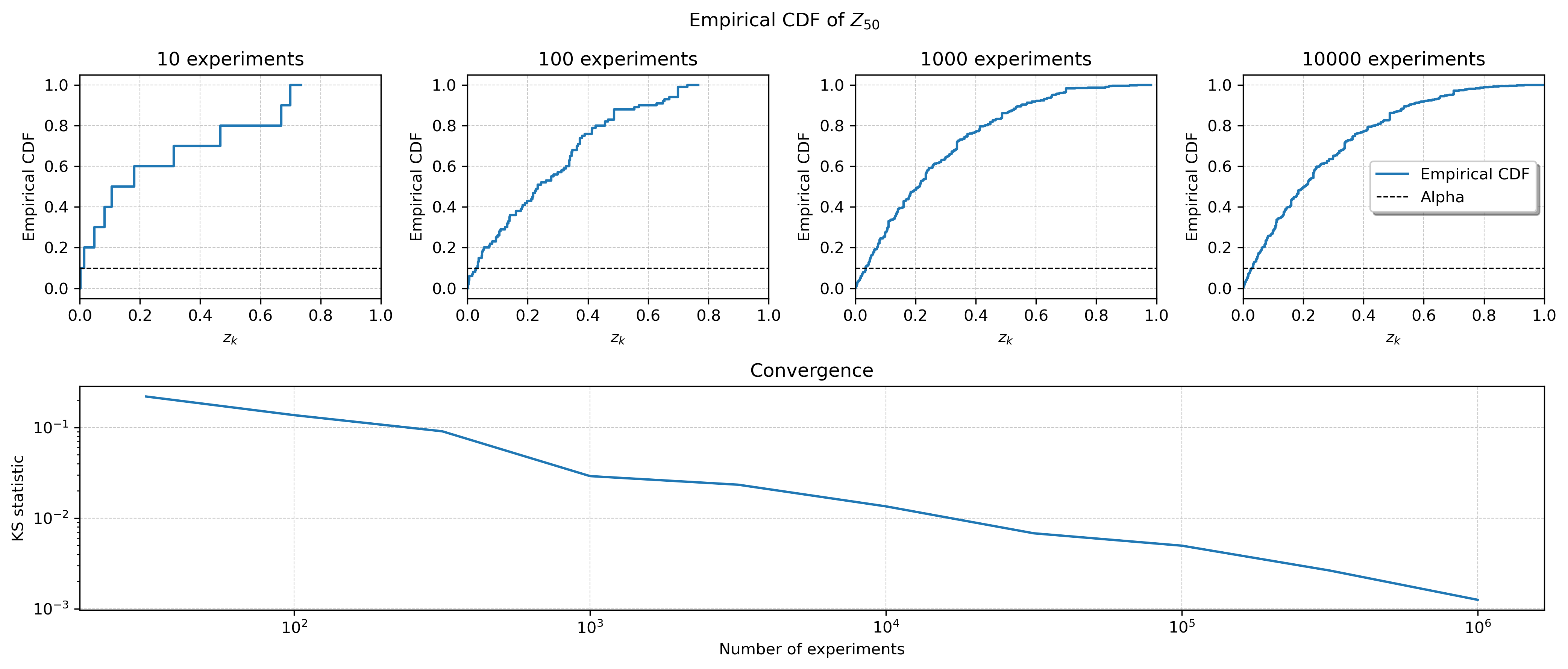}
\caption{\textbf{Convergence of the ECDF of $Z_k$ used in the Monte Carlo algorithm.} }
\Description{Plot showing the convergence of the empirical cumulative distribution function of $Z_k$ used in the Monte Carlo algorithm based on the number of experiments $N_e$. First, four plots showcase the behaviour of the ECDF with different number of experiments, respectively 10, 100, 1000 and 10000, showcasing a slow convergence to a smooth line. Below, another panel showcases the convergence in terms of the Kolmogorov-Smirnov distance between more complex models, indicating a smooth and linear convergence in the log-log plot.}
\label{fig:convergence}
\end{center}
\end{figure*}

\begin{itemize}
    \item \textbf{Using the Analytical CDF:} When the hypergeometric CDF can be computed directly or is precomputed, the computational cost of the Monte Carlo algorithm is $O(N_e \cdot k)$, where $N_e$ is the number of random rankings generated and $k$ is the number of positions in the ranking. The algorithm is detailed in Algorithm 1.
    \item \textbf{Using the Empirical CDF (ECDF):} If the computational cost of the hypergeometric CDF is prohibitively high, the algorithm estimates the CDF empirically using sampled rankings. This involves generating rankings and calculating quantiles based on the empirical distribution. The computational cost in this case is $O(N_e \cdot k + N_e \log(N_e) \cdot k)$, where the additional term arises from sorting the samples for quantile estimation.
\end{itemize}

\begin{algorithm}\label{alg:monte_carlo_cdf}
\caption{Monte Carlo fairness adjustment (with analytical CDF)}
\KwData{$n, n_p, k, \alpha, N_e$} \LinesNumbered
\KwResult{$\alpha_c$} 

$z \gets [0]$ \Comment*[r]{Initialize a vector of length $N_e$}
\For{$i = 1$ to $N_e$}{
    $x \gets \text{permutation}(n, n_p, k)$ \Comment*[r]{Generate a random ranking permutation}
    $ecdf \gets [0]$ \Comment*[r]{Initialize ECDF values for $k$ positions}
    $y \gets 0$\;
    \For{$j = 1$ to $k$}{
        $y \gets y + x[j]$ \Comment*[r]{Compute cumulative sum}
        $ecdf[j] \gets \text{hypergeometric\_cdf}(y, n, n_p, j)$\;
    }
    $z[i] \gets \min(ecdf)$\;
}
$z \gets \text{sort}(z)$\;
$\alpha_c \gets z[\lfloor \alpha \cdot N_e \rfloor]$\;
\end{algorithm}

The computational cost of the Monte Carlo algorithm critically depends on the number of sampled rankings $N_e$. $N_e$ must be sufficiently large to ensure that the empirical cumulative distribution function (ECDF), denoted as $F_{N_e}(x)$, converges to the true CDF $F(x)$.

The Glivenko-Cantelli theorem \cite{glivenko1933sulla, van2000asymptotic} guarantees that the ECDF $F_{N_e}(x)$ converges uniformly to the true CDF $F(x)$ as $N_e \to \infty$, with probability 1. However, the rate of this convergence is essential to quantify for practical purposes, and it is provided by the Dvoretzky–Kiefer–Wolfowitz (DKW) inequality \cite{dvoretzky1956asymptotic, van2000asymptotic}. The DKW inequality states that, with probability $1 - \beta$, the true CDF $F(x)$ lies within a margin $\varepsilon$ of the ECDF $F_{N_e}(x)$:
\[
F_{N_e}(x) - \varepsilon \leq F(x) \leq F_{N_e}(x) + \varepsilon,
\]
where $\varepsilon = \sqrt{\frac{\ln(2/\beta)}{2N_e}}$.

To relate this to practical requirements, let us define $N_e = 10^\gamma$ as the number of random rankings and $\varepsilon = 10^{-\delta}$ as the desired error margin for approximating the CDF of $Z_k$. The relationship becomes:
\[
\gamma = \log_{10}\left(\frac{\ln(2/\beta)}{2 \cdot 10^{-2\delta}}\right) = \log_{10}\left(\frac{\ln(2/\beta)}{2}\right) + 2\delta.
\]
Assuming a confidence level of $1 - \beta = 0.9$ (i.e., $\beta = 0.1$), and substituting $\ln(2/\beta) \approx 2.3$, we find $\gamma \approx 0.175 + 2\delta$. For example, if we require an error margin of $\varepsilon = 10^{-3}$ ($\delta = 3$), the number of samples needed is:
\[
\gamma = 0.175 + 2(3) = 6.175, \quad \text{or equivalently,} \quad N_e \approx 10^{6.175} \approx 1.5 \times 10^6.
\]
This convergence also appears empirically, as illustrated by Figure \ref{fig:convergence} for the ECDF of $Z_{50}$ with $n=100,n_p=30$. The panels above illustrate the convergence of the ECDF for an increased number of experiments visually, while the bottom plot shows the convergence of the Kolmogorov-Smirnov (KS) statistic \cite{an1933sulla} for an increasing number of experiments. We define the KS statistic between consecutive settings that contain more experiments, specifically $D_{10^\gamma} = \sup_x|F_{10^\gamma}(x)-F_{10^{\gamma-1}}(x)|$.

The accuracy of the quantile estimates for $Z_k$ depends directly on the convergence of the ECDF $F_{N_e}(x)$ to the true CDF $F(x)$. Since quantiles are defined as the inverse of the CDF, any deviation in the ECDF affects the estimated quantiles. As $N_e$ increases, the error margin $\varepsilon$ decreases, ensuring more precise quantile estimates. Thus, the rate of convergence of the ECDF governs the reliability of quantile-based calculations.

Since we can assume that $N_e$ is a constant (around one million) we can notice that in comparison, the computational cost of the algorithm to find the adjusted parameter $\alpha_c$ in \cite{zehlike2017fa, zehlike2020note} is $O(k^2\log(k))$, and is therefore more computationally expensive when $k$ is large.

\section{The re-ranking strategy}\label{app:rerank}

In this section, we describe in more detail the re-ranking algorithm that we use in this paper when one given ranking doesn't satisfy our fairness criteria.

\begin{algorithm}
\label{alg:rerank}
\caption{Re-ranking Strategy}
\KwData{$\tilde{x}, \alpha_c, id$} \LinesNumbered
\KwResult{Re-ranked $\tilde{x}$ and $id$} 

$P\_prot \gets [0]$ \Comment*[r]{Priority queue for protected group members}
$m \gets [0]$ \Comment*[r]{Vector to store fairness thresholds for each position}
\For{$i = 1$ to $n$}{
    \If{$\tilde{x}[i]=1$}{
        add($P\_prot$, $i$) \Comment*[r]{Add index of protected group member to queue}
    }
    $m[i] \gets F^{-1}(\alpha_c, n, n_p, i)$ \Comment*[r]{Compute fairness threshold}
}
$y \gets 0$ \Comment*[r]{Initialize cumulative count of protected group members}
\For{$i = 1$ to $n$}{
    $y \gets y + \tilde{x}[i]$ \Comment*[r]{Update count of protected members}
    \uIf{$y < m[i]$ }{ 
        $ind \gets $ dequeue($P\_prot$) \Comment*[r]{Get next protected group member}
        $\tilde{x}[i] \gets 1$ \Comment*[r]{Insert protected member at position $i$}
        $\tilde{x}[ind] \gets 0$ \Comment*[r]{Remove the replaced element}
        $id\_temp \gets id[i:ind-1]$ \Comment*[r]{Store intermediate IDs}
        $id[i] \gets id[ind]$ \Comment*[r]{Update ID at position $i$}
        $id[i+1:ind] \gets id\_temp$ \Comment*[r]{Shift other IDs}
        $y \gets y + 1$ \Comment*[r]{Increment count of protected members}
    }
    \ElseIf{$\tilde{x}[i]=1$}{
        dequeue($P\_prot$) \Comment*[r]{Remove protected group member from queue}
    }
}
\end{algorithm}

The re-ranking process operates by iteratively evaluating the fairness of the ranking at each position and making minimal adjustments to restore fairness whenever the test is failed. Algorithm 2 describes this approach in detail, where elements from the protected group are dynamically substituted into the ranking to satisfy the fairness constraints.

The algorithm takes as input two vectors: the binary vector indicating the demographic group of each individual in the ranking, $\tilde{x}$ and a vector containing the unique identifiers of individuals in the ranking, $id$. Additionally, the adjusted threshold derived from the Monte Carlo algorithm, $\alpha_c$, is given as input. 

The algorithm ensures that the protected group is appropriately represented in the ranking while preserving as much of the original order as possible. It achieves this by:
\begin{enumerate}
    \item \textbf{Checking fairness constraints:} At each position in the ranking, the algorithm verifies whether the cumulative representation of the protected group meets the fairness threshold.
    \item \textbf{Adjusting the ranking:} If the fairness test fails at any position, the algorithm substitutes the current element with the next available member of the protected group from a priority queue. The substitution involves reordering the subsequent elements to maintain consistency.
    \item \textbf{Preserving utility information:} The algorithm does not explicitly use utility values $y$, as the initial order of the ranking is assumed to reflect the utility. The re-ranking process relies solely on this relative order to make adjustments.
\end{enumerate}

In our discussion and algorithm, we do not explicitly mention the utility values $y$. This is because we assume that the order of the ranking is already based on utility, and this relative order is the only information required for re-ranking. Thus, the algorithm focuses solely on adjusting the ranking to meet fairness constraints without recalculating or explicitly using utility values.

Moreover, the focus of this algorithm is on addressing the under-representation of the protected group. However, the framework can be easily generalised to address other fairness scenarios, such as correcting over-representation or ensuring equal representation for both groups. By adapting the fairness thresholds and the re-ranking logic accordingly, the algorithm can be tailored to meet diverse fairness objectives in ranking systems.

\section{How to choose the correct generative model}\label{app:choice}

\paragraph{When to use the hypergeometric model:}
\begin{enumerate}
    \item \textbf{When $ k $ is large relative to $ n $:}  
    When the number of sampled candidates $ k $ is significant compared to the total population size $ n $, the finite nature of the sampling process becomes crucial. The hypergeometric distribution provides an exact model for fairness analysis in such settings, whereas the binomial model would overestimate the variance by ignoring the limited size of the population.  
    \begin{itemize}
        \item \textit{Example:} Suppose there are 50 applicants for a job ($ n = 50 $), and a company shortlists the top 30 candidates ($ k = 30 $). In this case, the sampling covers a large fraction of the population, and fairness evaluations based on the hypergeometric model are essential to account for the finite population size.
    \end{itemize}

    \item \textbf{When $ n_p $ is small:}  
    If the number of protected individuals $ n_p $ is small compared to the total population $ n $, the hypergeometric model accurately reflects the finite group dynamics. Ignoring these effects can lead to significant fairness misjudgments for under-represented groups.  
    \begin{itemize}
        \item \textit{Example:} In a scholarship program with 100 applicants, where only 5 belong to a protected group ($ n_p = 5 $), sampling 15 candidates ($ k = 15 $) may have a strong impact on the representation of the protected group. The hypergeometric model correctly incorporates the finite number of protected individuals.
    \end{itemize}

    \item \textbf{When fairness at the bottom-$ k $ positions is critical:}  
    The hypergeometric model naturally accommodates fairness evaluations across all levels of the ranking, including the bottom positions. This is especially relevant in applications where individuals at lower-ranking positions are still considered for fallback opportunities or face substantial consequences of bias.  
    \begin{itemize}
        \item \textit{Example:} In housing allocation programs, fairness at the bottom ranks ensures that under-represented groups are not excluded from fallback options, such as waiting lists. Similarly, in public service delivery programs, individuals at the bottom of a ranked list may face the most significant consequences of under-representation, making fairness at these positions crucial.
    \end{itemize}
\end{enumerate}

\paragraph{When to use the binomial model:}
\begin{enumerate}
    \item \textbf{When $ n $ is unknown or very large:}  
    In situations where the total population size $ n $ is either unknown or extremely large, the binomial distribution provides a practical approximation. This is because the binomial model assumes sampling with replacement, removing the need for knowledge of $ n $.  
    \begin{itemize}
        \item \textit{Example:} Consider evaluating fairness in the top-10 results of a search engine query. Here, the total number of webpages indexed by the search engine ($ n $) is enormous and irrelevant to the analysis. The binomial model simplifies the problem by focusing only on the top results.
    \end{itemize}
\end{enumerate}

\end{document}